\global\long\def\ExpOfD#1#2{\mathbb{E}_{#2}\left[#1\right]}%
\global\long\def\ExpOf#1{\ExpOfD{#1}{}}%
\global\long\def\PrOfD#1#2{\Pr_{#2}\left[#1\right]}%
\global\long\def\PrOf#1{\PrOfD{#1}{}}%
\global\long\def\MyMathop#1{\mathop{\mathrm{#1}}}%
\global\long\def\argmax{\MyMathop{argmax}}%
\global\long\def\xPrime#1{#1^{\prime}}%
\global\long\def\CEof#1#2{\mathrm{\textsc{CE}}_{#1}\left(#2\right)}%
\global\long\def\GrowthRateOf#1#2{g_{#1}\left(#2\right)}%
\global\long\def\GrowthRateOf#1#2{\mathcal{R}{}_{#1}\left(#2\right)}%
\global\long\def\GrowthRateOf#1#2{gr_{#1}\left(#2\right)}%
\global\long\def\PrefDist{\Phi}%
\global\long\def\PrefDistOpt{\PrefDist^{\star}}%
\global\long\def\CRRAr#1{\textsc{CRRA}_{#1}}%
\global\long\def\prefOver{\succcurlyeq}%
\global\long\def\strictPrefOver{\succ}%
\global\long\def\Indifferent{\sim}%
\providecommand{\tabularnewline}{\\}
\theoremstyle{remark}
\newtheorem{rem}{\protect\remarkname}
\theoremstyle{plain}
\newtheorem{fact}{\protect\factname}
\theoremstyle{plain}
\newtheorem{thm}{\protect\theoremname}
\theoremstyle{plain}
\newtheorem{cor}{\protect\corollaryname}
\pgfplotsset{compat = newest}
\setlist[itemize,enumerate]{nosep}
\providecommand{\corollaryname}{Corollary}
\providecommand{\factname}{Fact}
\providecommand{\remarkname}{Remark}
\providecommand{\theoremname}{Theorem}
\begin{document}
\global\long\def\GMof#1{GM\left(#1\right)}%
\global\long\def\HMof#1{HM\left(#1\right)}%
\global\long\def\GMof#1{\mathbb{GM}\left[#1\right]}%
\global\long\def\HMof#1{\mathbb{HM}\left[#1\right]}%

\title{Evolutionary Foundation for Heterogeneity in Risk Aversion{\renewcommand{\thefootnote}{\relax}\renewcommand{\footnotemark}{\relax}\thanks{We thank the editor (Tilman B\"orgers), an anonymous referee, and
participants at the Neuroeconomics and the Biological Basis of Economics
conference at Simon-Fraser University, the Bar-Ilan Game and Economic
theory seminar, the 16\protect\textsuperscript{\uline{th}} Meeting
of the Society for Social Choice and Welfare, and the LEG2022 conference
at the IMT School for Advanced Studies (Lucca) for various helpful
comments. We gratefully acknowledge the financial support of the European
Research Council (\#677057), US--Israel Bi-national Science Foundation
(\#2020022), and Israel Science Foundation (\#2566/20, \#2443/19,
and \#1626/18).}}}
\date{Jan 24\textsuperscript{{\large{}\uline{th}}}, 2023\\
{\small{}Final pre-print of a manuscript accepted for publication
in }\emph{\small{}Journal of Economic Theory}{\small{}.}}
\author{Yuval Heller\thanks{Department of Economics, Bar-Ilan University. Email: \protect\href{mailto:yuval.heller@biu.ac.il}{yuval.heller@biu.ac.il}.}\and
Ilan Nehama\thanks{Department of Exact Sciences, Haifa University. Email: \protect\href{mailto:inehama@sci.haifa.ac.il}{inehama@sci.haifa.ac.il}
or \protect\href{mailto:ilan.nehama@mail.huji.ac.il}{ilan.nehama@mail.huji.ac.il}.}}
\maketitle
\begin{abstract}
We examine the evolutionary basis for risk aversion with respect to
aggregate risk. We study populations in which agents face choices
between alternatives with different levels of aggregate risk. We show
that the choices that maximize the long-run growth rate are induced
by a heterogeneous population in which the least and most risk-averse
agents are indifferent between facing an aggregate risk and obtaining
its linear and harmonic mean for sure, respectively. Moreover, approximately
optimal behavior can be induced by a simple distribution according
to which all agents have constant relative risk aversion, and the
coefficient of relative risk aversion is uniformly distributed between
zero and two.

\textbf{Keywords:} Evolution of preferences, risk interdependence,
long-run growth rate.

\textbf{JEL Classification:} D81.
\end{abstract}

\section{Introduction}

Our understanding of risk attitudes can be sharpened by considering
their evolutionary basis in situations in which agents face choices
that affect their number of offspring (see \citealp{robson2011evolutionary},
for a survey). \citet{lewontin1969population} shows that idiosyncratic
risk (independent across individuals) induces a higher expected long-run
growth rate (henceforth, abbreviated as \emph{growth rate}) than aggregate
risk (correlated across individuals) with the same marginal distribution
(i.e., for each individual the distribution of offspring induced is
the same). Specifically, the growth rate induced by an idiosyncratic-risk
lottery is equal to its arithmetic mean, while the growth rate induced
by an aggregate-risk lottery is equal to its geometric mean.

In his seminal paper, \citet{robson1996biological} shows that when
a population faces a choice between several aggregate-risk lotteries,
the optimal growth rate can be achieved by choosing the mixture of
these lotteries that maximizes the expected logarithm of the number
of offspring.\footnote{In this paper, we focus on characterizing the type that maximizes
the growth rate (and do not explicitly model the underlying dynamics).
This focus is motivated by the following sketched argument (which
is an adaptation of Fisher's Fundamental Theorem; see \citealp[Section 3.6]{weibull1997evolutionary}
for a textbook exposition). Consider a large population in which various
groups of agents have different heritable types, where the type determines
how the agent chooses between lotteries over the number of offspring
(importantly, the number of the agent's offspring is independent of
the types' frequencies in the population). Occasionally, new types
are introduced into the population following a genetic mutation. Observe
that the population share of agents of the type that induces the highest
growth rate grows until, in the long run, almost all agents are of
this type (see, e.g., \citealp{robson2011evolutionary}, for a detailed
argument of why natural selection induces agents to have types that
maximize the growth rate).}$^{,}$\footnote{See also \citet{robson2009evolution} and \citet{netzer2009evolution}
who study the evolution of risk attitudes and their impact on time
preferences, \citet{heller2014overconfidence} who argues that the
evolution of risk attitudes induces overconfidence, \citet{robatto2017biological}
who study choices that influence fertility rate in continuous time,
\citet{Robson-Samuelson-2019} who explore age-structured populations,
\citet{netzer2021endogenous} who argue that constrained optimal perception
affects people's risk attitudes and induces probability weighting,
\citet{robson2021evolved} who study the relation between aggregate
risk and the equity premium, and \citet{heller2021evolution} who
analyze heritable risk, which is correlated between an agent and her
offspring.} For example, consider a choice between an alternative that yields
2 offspring to each agent for sure (the \emph{safe} alternative) or
a \emph{risky} alternative yielding either 1 offspring or 5 offspring
to all agents who choose it, with equal probabilities. One can show
that the expected logarithm of the number of offspring is maximized
(and the optimal growth rate is achieved) by having $\nicefrac{1}{3}$
of the agents choosing the safe alternative, and the remaining $\nicefrac{2}{3}$
of the agents choosing the risky alternative. This mixture yields
an average number of offspring in the population that is either 4
or $\nicefrac{4}{3}$ in each generation. This heterogeneity in the
choices of the agents can be interpreted as \emph{bet-hedging }the
expected number of offspring in the population (\citealp{cohen1966optimizing,cooper1982adaptive,bergstrom1997storage}).
The existing literature typically does not specify how the optimal
bet-hedging mixtures are implemented.\footnote{\citet{mcnamara1995implicit} shows that the optimal bet-hedging can
be implemented when the agents maximize their relative fitness (namely,
the ratio between an agent\textquoteright s number of offspring and
the total number of offspring in her generation (see also~\citealp{grafen1999formal,curry2001decision,orr2007absolute}).
However, this maximization requires the agents to know the aggregate
behavior in the population, which was, arguably, implausible in most
of our evolutionary past.}

Note that each agent faces a non-convex choice between alternatives,
and thus she cannot hedge her own personal risk (for example, there
is no choice that yields her either 4 or $\nicefrac{4}{3}$ offspring
in the above example). One way to implement the optimal level of bet-hedging
is to induce each agent to randomly choose her alternative according
to the probability that maximizes the population's growth rate. Observe
that this behavior is complicated in the sense that it is not induced
by a von Neumann--Morgenstern (vNM) utility. For example, in the
above scenario she strictly prefers choosing a lottery to either of
the two alternatives. Hence, when facing a choice between two alternatives,
she needs to evaluate her utility also from various lotteries over
them.

In this paper, we introduce a new mechanism for implementing the optimal
growth rate. We show that the optimal growth rate is induced by a
heterogeneous population of utility maximizers in which agents have
different levels of risk aversion with respect to the aggregate risk
(and all agents are risk neutral with respect to idiosyncratic risk).
In this population, the most risk-averse agent is indifferent between
obtaining a risky lottery $\boldsymbol{y}$ and obtaining the harmonic
mean of~$\boldsymbol{y}$ for sure, while the least risk-averse agent
is risk neutral, that is, indifferent between obtaining the risky
lottery~$\boldsymbol{y}$ and obtaining the arithmetic mean of~$\boldsymbol{y}$
for sure. Moreover, we show that a nearly-optimal growth rate can
be achieved by a simple distribution of vNM utilities, according to
which all agents have constant relative risk aversion, and the risk
coefficient is uniformly distributed between zero and two. 

\paragraph{Highlights of the Model}

We consider a continuum population with asexual reproduction. Each
agent lives for a single generation, during which she faces a choice
between two lotteries over the number of offspring: a \emph{safe }(degenerate)\emph{
alternative} that gives $\mu$ offspring for sure, and a \emph{risky
alternative} $\boldsymbol{y}$ with aggregate risk.\footnote{A restriction of the present analysis is that it assumes that in any
generation there exists a single risky alternative (as discussed in
Section \ref{sec:Discussion}). We leave for future research the important
question of how to extend our analysis to a more general setup with
multiple (non-degenerate) risky alternatives.} Nature induces a distribution of risk preferences with regard to
aggregate risk, according to which each agent in the population (deterministically)
chooses between the risky alternative and the safe alternative.

\paragraph{Main Results}

If nature were limited to endowing all agents with the same preference,
then it would be optimal for all agents to evaluate any risky alternative
$\boldsymbol{y}$ as having a certainty equivalent of its geometric
mean. However, heterogeneous populations can induce a substantially
higher growth rate, because heterogeneity in risk aversion allows
the population to hedge the aggregate risk by enabling scenarios in
which only a portion of the population (the less risk-averse agents)
chooses the risky alternative.

Theorem~\ref{thm:AlphaStar Characterization} characterizes the optimal
share of agents that choose the risky alternative, that is, the share
that maximizes the long-run growth rate. In particular, it shows that
all agents should choose the safe option iff $\ExpOf{\boldsymbol{y}}\leqslant\mu$,
and all agents should choose the risky option iff the harmonic mean
of $\boldsymbol{y}$, $\HMof{\boldsymbol{y}}$, satisfies $\HMof{\boldsymbol{y}}\geqslant\mu$.
Moreover, we characterize the optimal preference distribution (Theorem~\ref{thm:PhiStar-CE})
as follows: (1) the least risk-averse agent in the population is risk
neutral, (2) the most risk-averse agent in the population has harmonic
utility, i.e., she evaluates any risky alternative $\boldsymbol{y}$
as having a certainty equivalent of its harmonic mean, and (3) all
agents in the population should be risk averse, but less risk averse
than the harmonic utility.

The optimal distribution of preferences is quite complicated. By contrast,
our numerical analysis shows that a nearly-optimal growth rate can
be achieved by a simple distribution of vNM utilities with constant
relative risk aversion ($\CRRAr{}$), where the relative risk coefficient
is uniformly distributed between zero (risk neutrality) and two (harmonic
utility). The predictions of our model fit reasonably well with the
empirical works on the distribution of risk attitudes in the population,
as discussed in Section~\ref{sec:Discussion}.

\paragraph{Structure}

The paper is structured as follows. In Section~\ref{sec:Model} we
describe the model. Section~\ref{sec:Results} presents the analytic
results, which are supplemented by a numerical analysis in Section~\ref{sec:Numeric-Analysis}.
We conclude with a discussion in Section~\ref{sec:Discussion}.

\section{Model\label{sec:Model}}

Consider a continuum population with an initial mass one. Reproduction
is asexual. Time is discrete, indexed by $t\in\mathbb{N}$. Each agent
lives a single time period (which is interpreted as a generation).
In each time period, each agent in the population faces a choice between
two alternatives, where each alternative is a lottery over the number
of offspring. The first alternative (henceforth, the \emph{safe alternative})
yields all agents who choose it with the same number of offspring,
$\mu$. The second alternative bears aggregate risk (henceforth, the
\emph{risky alternative}). That is, all agents who choose the risky
alternative have the same number of offspring, but this number is
a random variable $\boldsymbol{y}$ with finite support $supp\left(\boldsymbol{y}\right)=\left\{ y_{1},...,y_{n}\right\} \in\mathbb{R}_{+}$
and distribution $\PrOf{\boldsymbol{y}=y_{i}}=p_{i}$.

Let $\mathscr{Y}$ denote the set of risky alternatives (i.e., distributions
over nonnegative numbers). A risky alternative $\boldsymbol{y}\in\mathscr{Y}$
is \emph{degenerate} if $\left|supp\left(\boldsymbol{y}\right)\right|=1$,
and we identify it with the respective constant that it yields.
\begin{rem}
Our model can capture a more general setup that also includes idiosyncratic
risk (which is independent between different agents). It is well known
(see, e.g., \citealp{robson1996biological}) that the long-run growth
rate is the same for all idiosyncratic random variables with the same
expectation.\footnote{This is implied by an exact law of large numbers for continuum populations.
We refer the interested reader to~\citet{duffie2012exact} (and the
citations therein) for details on how the exact law of large numbers
is formalized in a related setup.} In this extended setup, one should interpret the safe lottery as
one that gives an expected number of offspring $\mu$, where the number
of offspring of each agent choosing the safe alternative might be
the result of an arbitrary idiosyncratic lottery with expectation~$\mu$.
Similarly, the risky alternative should be interpreted as a random
variable~$\boldsymbol{y}$ over the expected number of offspring
(i.e., for each of the realizations $\boldsymbol{\boldsymbol{y}}=y_{i}$
of the risky alternative, the number of offspring of each agent choosing
the risky alternative might be any idiosyncratic random variable with
expectation $y_{i}$). An additional assumption in this extended model
is that all agents are endowed with risk neutrality with respect to
idiosyncratic risk (i.e., they care only about the expected number
of offspring).
\end{rem}

\paragraph{Growth rate}

Let $\boldsymbol{w}\left(t\right)$ denote the size of the population
in time $t$. Let $\GrowthRateOf{\alpha}{\boldsymbol{y},\mu}$ denote
the long-run growth rate (henceforth, \emph{growth rate}) of a population
in which in each generation a share $\alpha$ of the population chooses
the risky alternative $\boldsymbol{y}$ and the remaining agents obtain
the safe option $\mu$. It is well known (see, e.g.,~\citealp{robson1996biological})
that the growth rate $\GrowthRateOf{\alpha}{\boldsymbol{y},\mu}$
equals the geometric mean of $\alpha\boldsymbol{y}+\left(1-\alpha\right)\mu$:
\begin{equation}
\begin{array}{rl}
\GrowthRateOf{\alpha}{\boldsymbol{y},\mu} & \equiv\lim_{T\rightarrow\infty}\sqrt[T]{\frac{\boldsymbol{w}\left(T\right)}{\boldsymbol{w}\left(1\right)}}=\GMof{\alpha\boldsymbol{y}+\left(1-\alpha\right)\mu}=\\
 & =\prod_{i\leqslant n}\left(\alpha y_{i}+\left(1-\alpha\right)\mu\right)^{p_{i}}=e^{\sum_{i\leqslant n}ln\left(p_{i}\cdot\left(\alpha y_{i}+\left(1-\alpha\right)\mu\right)\right)}\text{.}
\end{array}\label{eq:growth-rate}
\end{equation}

The intuition for Equation~(\ref{eq:growth-rate}) is as follows.
Let $\boldsymbol{z}\left(t\right)=\frac{\boldsymbol{w}\left(t+1\right)}{\boldsymbol{w}\left(t\right)}$
be the mean number of offspring in generation~$t$; i.e., $\boldsymbol{z}\left(t\right)$
is a sequence of i.i.d. variables that are distributed like the random
variable $\alpha\boldsymbol{y}+\left(1-\alpha\right)\mu$. Hence,
the size of the population at time $T$ equals
\[
\begin{array}{l}
\frac{\boldsymbol{w}\left(T\right)}{\boldsymbol{w}\left(1\right)}=\prod_{t<T}\frac{\boldsymbol{w}\left(t+1\right)}{\boldsymbol{w}\left(t\right)}=e^{\left(\sum_{t\leqslant T}\ln\left(\boldsymbol{z}\left(t\right)\right)\right)}\\
\quad\Rightarrow\lim\limits _{T\rightarrow\infty}\sqrt[T]{\frac{\boldsymbol{w}\left(T\right)}{\boldsymbol{w}\left(1\right)}}=\lim\limits _{T\rightarrow\infty}e^{\left(\frac{1}{T}\sum\limits _{t<T}\ln\left(\boldsymbol{z}\left(t\right)\right)\right)}\stackrel{{\scriptstyle \left(\star\right)}}{=}e^{\ExpOf{\ln\left(\alpha\boldsymbol{y}+\left(1-\alpha\right)\mu\right)}}=\prod\limits _{i\leqslant n}\left(\alpha y_{i}+\left(1-\alpha\right)\mu\right)^{p_{i}}\text{,}
\end{array}
\]
where the equality marked by $\left(\star\right)$ is implied by the
law of large numbers.

Let $\alpha^{\star}\in\left[0,1\right]$ be the share of agents who
choose the risky alternative that maximizes the long-run growth rate:
\begin{equation}
\alpha^{\star}\left(\boldsymbol{y},\mu\right)=\argmax\limits _{\alpha\in\left[0,1\right]}\left(\GrowthRateOf{\alpha}{\boldsymbol{y},\mu}\right)\text{.}\label{eq:optimal-alpha}
\end{equation}
We show in Theorem~\ref{thm:AlphaStar Characterization} that $\alpha^{\star}\left(\boldsymbol{y},\mu\right)$
is unique.

\paragraph*{Preferences}

Each agent is endowed with a preference over the lotteries, i.e.,
a linear order $\prefOver$ over the set $\mathscr{Y}$ (and we use
the notation $\Indifferent$ for indifference). That is, Agent~$a$
chooses the risky alternative iff $\boldsymbol{y}\strictPrefOver_{a}\mu$
(the tie-breaking rule that is applied when $\boldsymbol{y}\Indifferent_{a}\mu$
has no impact on our results since it holds for a share of measure
zero of the population). A preference $\prefOver$ is \emph{regular}
if it satisfies the following two mild assumptions: (1) monotonicity
of the safe alternatives: $\mu<\xPrime{\mu}$ implies that $\mu\prec\xPrime{\mu}$,
and (2) any risky alternative has a certainty equivalent: for any
$y\in\mathscr{Y}$, there exists a safe alternative $\mu$ such that
$\boldsymbol{y}\Indifferent\mu$.

Let $\mathscr{U}$ denote the set of regular preferences. Observe
that any regular preference $\prefOver$ can be represented by a \emph{certainty
equivalent function }$CE_{\prefOver}\colon\mathscr{Y}\rightarrow\mathbb{R}_{+}$,
which evaluates each risky alternative in terms of the equivalent
safe alternative (i.e., $CE_{\prefOver}\left(y\right)=\mu$ iff $\boldsymbol{y}\Indifferent\mu$).

We assume that nature endows the population with a distribution $\PrefDist$
of regular preferences, and that each agent uses her preference to
choose an alternative. A distribution of regular preferences $\PrefDist$
induces a choice function $\alpha_{\PrefDist}\colon\mathscr{Y}\times\mathbb{R}_{+}\rightarrow\left[0,1\right]$,
which describes the share of agents who choose the risky option for
any pair of alternatives.

A distribution of regular preferences $\PrefDistOpt$ is \emph{optimal}
if for any $\boldsymbol{y}\in\mathscr{Y}$ and $\mu\in\mathbb{R}_{+}$
it maximizes the growth rate, i.e.,
\begin{equation}
\alpha_{\PrefDistOpt}\left(\boldsymbol{y},\mu\right)=\alpha^{\star}\left(\boldsymbol{y},\mu\right)=\argmax\limits _{\alpha\in\left[0,1\right]}\left(\GrowthRateOf{\alpha}{\boldsymbol{y},\mu}\right)\text{.}\label{eq:optimal-alpha-Pref}
\end{equation}

\paragraph{Dynamic interpretation}

Our interpretation of the model is that in each generation, all agents
face a choice between the same pair of a safe alternative and a risky
alternative. The choice changes over time, such that in different
generations there might be different alternatives to choose from.
The optimal distribution of preferences is required to maximize the
expected long-run growth rate, and it is not hard to see that this
is equivalent to maximizing the growth rate $\GrowthRateOf{\alpha}{\boldsymbol{y},\mu}$
for any combination of a risky alternative~$\boldsymbol{y}$ and
a safe alternative~$\mu$ (which appears with non-zero frequency).

Further, observe that the optimal distribution is required to maximize
the growth rate among all possible choice profiles (including choice
functions that are not consistent with having preferences). Thus,
our results show that the restriction of agents to regular preferences
over the set of risky alternatives does not decrease the maximal feasible
growth rate (see~\citealp{robson2001biological,robson2011evolutionary}
for arguments in favor of endowing agents with utilities).

\paragraph{Utility}

A common way to represent the preference of Agent~$a$ is to use
a \emph{utility function}, $U_{a}\colon\mathscr{Y}\rightarrow\mathbb{R}_{+}$,
such that Agent~$a$ strictly prefers an alternative~$\boldsymbol{y}\in\mathscr{Y}$
to another alternative $\boldsymbol{\xPrime y}\in\mathscr{Y}$ iff
$U_{a}\left(\boldsymbol{y}\right)>U_{a}\left(\boldsymbol{\xPrime y}\right)$.
We note that for a given regular preference $\prefOver$, its certainty
equivalent function $CE_{\prefOver}$ is in particular a utility function
that represents~$\prefOver$.

A preference $\prefOver$ is a \emph{vNM (von Neumann--Morgenstern)
preference} if it has an expected utility representation, that is,
if there exists a Bernoulli utility function $u\colon\mathbb{R}_{+}\rightarrow\mathbb{R}$
such that $\prefOver$ is represented by the utility function $\ExpOf{u\left(\boldsymbol{y}\right)}=\sum_{i}p_{i}\cdot u\left(y_{i}\right)$
for any $\boldsymbol{y}\in\mathscr{Y}$.

\paragraph{Risk aversion and constant relative risk aversion ($\protect\CRRAr{}$)
preferences}

A preference $\prefOver$ is \emph{risk averse} (resp., risk neutral)
if $CE_{\prefOver}\left(\boldsymbol{y}\right)<\ExpOf{\boldsymbol{y}}$
(resp., $CE_{\prefOver}\left(\boldsymbol{y}\right)=\ExpOf{\boldsymbol{y}}$)
for any nondegenerate risky alternative~$\boldsymbol{y}\in\mathscr{Y}$.
A preference $\prefOver$ is \emph{more risk averse} than a preference
$\xPrime{\prefOver}$ if $\CEof{\prefOver}{\boldsymbol{y}}<\CEof{\xPrime{\prefOver}}{\boldsymbol{y}}$
for any nondegenerate risky alternative~$\boldsymbol{y}\in\mathscr{Y}$.

For any $\rho\geqslant0$, let $\CRRAr{\rho}$ denote the constant
relative risk aversion preference with relative risk coefficient $\rho$,
i.e., the expected utility preference defined by
\begin{equation}
\widehat{u}_{\rho}\left(y_{i}\right)=\begin{cases}
\frac{y_{i}^{1-\rho}-1}{1-\rho} & \rho\neq1\\
\ln\left(y_{i}\right) & \rho=1
\end{cases}\text{.}\label{eq:CRRA}
\end{equation}

Let $\HMof{\boldsymbol{y}}$ and $\GMof{\boldsymbol{y}}$ denote the
harmonic and geometric means of $\boldsymbol{y}$, respectively, 
\[
\HMof{\boldsymbol{y}}=\left(\ExpOf{\boldsymbol{y}^{-1}}\right)^{-1}=\frac{1}{\sum_{i}\nicefrac{p_{i}}{y_{i}}},\,\,\,\,\GMof{\boldsymbol{y}}=\prod_{i}y_{i}^{p_{i}}\text{.}
\]

It is well known that:
\begin{enumerate}
\item $\HMof{\boldsymbol{y}}\leqslant\GMof{\boldsymbol{y}}\leqslant\ExpOf{\boldsymbol{y}}$
with strict inequality whenever $\boldsymbol{y}$ is nondegenerate.
\item Under the utilities $\CRRAr 0$, $\CRRAr 1$, and $\CRRAr 2$, the
certainty equivalent values of any risky alternative $\boldsymbol{y}\in\mathscr{Y}$
are its arithmetic, geometric, and harmonic means, respectively. Hence,
we also refer to $\CRRAr 0$ as the \emph{linear utility}, to $\CRRAr 1$
as the \emph{logarithmic utility}, and to $\CRRAr 2$ as the \emph{harmonic
utility}.
\end{enumerate}
Last, a distribution $\PrefDist$ of regular preferences is \emph{monotone}
if its support is a chain with respect to the strong risk aversion
order; i.e., for any two preferences $\prefOver$ and $\xPrime{\prefOver}$
in the support of $\PrefDist$, $\prefOver$ is either strictly more
risk averse or strictly less risk averse than $\xPrime{\prefOver}$.

\section{Results\label{sec:Results}}

Observe that if nature is limited to a homogeneous population in which
all agents have the same risk preference, then the maximal long-run
growth rate is attained by the logarithmic utility $\CRRAr 1$, according
to which the certainty equivalent of a risky alternative is its geometric
mean (see, e.g.,~\citealp{lewontin1969population}). This is an immediate
corollary of Equation~\eqref{eq:growth-rate} and the definition
of $\CRRAr{}$ preferences.
\begin{fact}
\label{fact:logUtil is optimal}For any $\boldsymbol{y}\in\mathscr{Y}$
and $\mu\in\mathbb{R}_{+}$, $\GrowthRateOf 1{\boldsymbol{y},\mu}\geqslant\GrowthRateOf 0{\boldsymbol{y},\mu}\Leftrightarrow\widehat{U}_{1}\left(\boldsymbol{y}\right)\geqslant\widehat{U}_{1}\left(\mu\right)$.
\end{fact}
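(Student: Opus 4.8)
The plan is to read both sides of the claimed equivalence straight off the growth-rate formula~\eqref{eq:growth-rate} and then invoke nothing beyond the strict monotonicity of the exponential. First I would specialize~\eqref{eq:growth-rate} at the two extreme shares: taking $\alpha=1$ gives $\GrowthRateOf 1{\boldsymbol{y},\mu}=\GMof{\boldsymbol{y}}=e^{\ExpOf{\ln\boldsymbol{y}}}$, while taking $\alpha=0$ gives $\GrowthRateOf 0{\boldsymbol{y},\mu}=\GMof{\mu}=\mu=e^{\ln\mu}$, since the geometric mean of a degenerate lottery is the constant it yields.

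Next I would record that the logarithmic utility $\CRRAr 1$ is represented, in expected-utility form, by $\widehat{U}_{1}(\boldsymbol{y})=\ExpOf{\widehat{u}_{1}(\boldsymbol{y})}=\ExpOf{\ln\boldsymbol{y}}$, so in particular $\widehat{U}_{1}(\mu)=\ln\mu$ on a degenerate alternative. (Equivalently one could use the certainty-equivalent representation $\CEof{\CRRAr 1}{\boldsymbol{y}}=\GMof{\boldsymbol{y}}$, which is ordinally equivalent and makes the two displayed quantities literally coincide; the choice of representation is immaterial because the statement concerns only the induced order.) Combining this with the previous step,
\[
\GrowthRateOf 1{\boldsymbol{y},\mu}\geqslant\GrowthRateOf 0{\boldsymbol{y},\mu}\iff e^{\ExpOf{\ln\boldsymbol{y}}}\geqslant e^{\ln\mu}\iff\ExpOf{\ln\boldsymbol{y}}\geqslant\ln\mu ,
\]
where the second equivalence uses strict monotonicity of $t\mapsto e^{t}$; and the rightmost inequality is exactly $\widehat{U}_{1}(\boldsymbol{y})\geqslant\widehat{U}_{1}(\mu)$, which closes the argument.

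There is essentially no obstacle here: the statement is the immediate corollary of~\eqref{eq:growth-rate} and the definition of the $\CRRAr{}$ preferences that the text advertises. The only thing worth a moment's care is the bookkeeping of which representation of $\CRRAr 1$ is meant when evaluating it at the degenerate alternative $\mu$; once that is pinned down, the proof is a one-line monotonicity observation. Indeed, if one wished to dispense with utilities altogether, the whole claim reduces to the tautology $\GMof{\boldsymbol{y}}\geqslant\mu\iff\GMof{\boldsymbol{y}}\geqslant\GMof{\mu}$.
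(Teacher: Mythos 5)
Your proof is correct and matches the paper's intended argument: the paper offers no separate proof of Fact~\ref{fact:logUtil is optimal}, stating only that it is an immediate corollary of Equation~\eqref{eq:growth-rate} and the definition of $\CRRAr{}$ preferences, and your specialization of the growth-rate formula at $\alpha=0,1$ together with $\widehat{U}_{1}(\boldsymbol{y})=\ExpOf{\ln\boldsymbol{y}}$ and monotonicity of the exponential is exactly that corollary spelled out.
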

Our analysis is motivated by the fact that heterogeneous populations
in which agents differ in the extent of their risk aversion can induce
substantially higher growth rate, because the heterogeneity allows
the population to hedge the aggregate risk by having only the more
risk-averse agents choose the risky alternative. For example, if agents
face a choice between a safe alternative $\mu$ yielding one offspring
to each agent or a risky alternative $\boldsymbol{y}$ yielding either
4 offspring or 0.25 offspring with equal probabilities, then any homogeneous
population in which all agents share the same risk preference (with
a deterministic tie-breaking rule) yields a growth rate of 1 (because
both $\GrowthRateOf 0{\boldsymbol{y},\mu}=\mu=1$ and $\GrowthRateOf 1{\boldsymbol{y},\mu}=\GMof{\boldsymbol{y}}=4^{0.5}\cdot0.25^{0.5}=1$).
By contrast, a heterogeneous population in which agents differ in
the extent of their risk aversion such that half the population choose
the risky alternative $\boldsymbol{y}$ and the others choose the
safe alternative $\mu$ induces a substantially higher growth rate
of
\[
\GrowthRateOf{0.5}{\boldsymbol{y},\mu}=\GMof{0.5\cdot\boldsymbol{y}+0.5\cdot1}=2.5^{0.5}\cdot0.625^{0.5}=1.25\text{.}
\]

Our first result characterizes the optimal share of agents that choose
the risky alternative.\footnote{Similar results to Theorem~\ref{thm:AlphaStar Characterization}
have been presented in related setups (see, e.g., the relative fitness
condition in \citealp{Brennan2012}). For completeness we present
a short proof.}
\begin{thm}
\label{thm:AlphaStar Characterization}Fix $\boldsymbol{y}\in\mathscr{Y}$
and $\mu\in\mathbb{R}_{+}$. Then:
\begin{enumerate}
\item $\alpha^{\star}\left(\boldsymbol{y},\mu\right)=\argmax_{\alpha\in\left[0,1\right]}\left(\GrowthRateOf{\alpha}{\boldsymbol{y},\mu}\right)$
is unique.
\item $\alpha^{\star}\left(\boldsymbol{y},\mu\right)=0$ iff $\ExpOf{\boldsymbol{y}}\leqslant\mu$,
and

$\alpha^{\star}\left(\boldsymbol{y},\mu\right)=1$ iff $\HMof{\boldsymbol{y}}\geqslant\mu$.
\item If $\mu\in\left(\HMof{\boldsymbol{y}},\ExpOf{\boldsymbol{y}}\right)$,
then $\alpha^{\star}\left(\boldsymbol{y},\mu\right)\in\left(0,1\right)$
is the unique solution to the following equation:
\begin{equation}
\HMof{1+x\cdot\left(\nicefrac{\boldsymbol{y}}{\mu}-1\right)}=1\text{.}\label{eq:AlphaStarIsAsolutionToThis}
\end{equation}
\end{enumerate}
\end{thm}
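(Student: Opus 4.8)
The plan is to reduce everything to a one-variable concavity argument. By \eqref{eq:growth-rate}, over $\alpha\in[0,1]$ the growth rate $\GrowthRateOf{\alpha}{\boldsymbol{y},\mu}$ has the same maximizers as
\[
\phi(\alpha):=\sum_{i\leqslant n}p_{i}\ln\!\big((1-\alpha)\mu+\alpha y_{i}\big).
\]
First I would dispose of the degenerate and boundary situations: if $\mu=0$ or $\boldsymbol{y}$ is degenerate, then $\phi$ is monotone (or constant, when $\boldsymbol{y}=\mu$) and all three claims are immediate since then $\ExpOf{\boldsymbol{y}}=\HMof{\boldsymbol{y}}$; and if $0\in supp(\boldsymbol{y})$ then $\phi(1)=-\infty$ while $\phi$ is finite on $[0,1)$, so $\alpha^{\star}<1$, consistent with $\HMof{\boldsymbol{y}}=0<\mu$. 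In the remaining case $\mu>0$, $\boldsymbol{y}$ nondegenerate, every $y_{i}>0$, each summand $\alpha\mapsto\ln((1-\alpha)\mu+\alpha y_{i})$ is smooth on a neighbourhood of $[0,1]$ and concave, strictly so whenever $y_{i}\neq\mu$; since nondegeneracy forces $y_{i}\neq\mu$ for some $i$, $\phi$ is strictly concave and continuous on the compact interval $[0,1]$, hence has a unique maximizer, giving part~1.

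For part~2 I would differentiate: $\phi'(\alpha)=\sum_{i\leqslant n}p_{i}\frac{y_{i}-\mu}{(1-\alpha)\mu+\alpha y_{i}}$ with $\phi''<0$, so $\phi'$ is strictly decreasing on $[0,1]$. At the endpoints,
\[
\phi'(0)=\frac{\ExpOf{\boldsymbol{y}}-\mu}{\mu},\qquad \phi'(1)=1-\mu\sum_{i\leqslant n}\frac{p_{i}}{y_{i}}=\frac{\HMof{\boldsymbol{y}}-\mu}{\HMof{\boldsymbol{y}}}.
\]
Hence $\ExpOf{\boldsymbol{y}}\leqslant\mu\iff\phi'(0)\leqslant0$, and then monotonicity of $\phi'$ makes $\phi$ nonincreasing on $[0,1]$, forcing $\alpha^{\star}=0$; conversely $\ExpOf{\boldsymbol{y}}>\mu$ gives $\phi'(0)>0$, hence $\alpha^{\star}>0$. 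Symmetrically $\alpha^{\star}=1\iff\phi'(1)\geqslant0\iff\HMof{\boldsymbol{y}}\geqslant\mu$.

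For part~3, the assumption $\mu\in(\HMof{\boldsymbol{y}},\ExpOf{\boldsymbol{y}})$ gives $\phi'(0)>0>\phi'(1)$, so by strict monotonicity $\phi'$ has a unique zero in $(0,1)$, which by part~2 is $\alpha^{\star}$. What is left is to recognize the first-order condition $\phi'(\alpha)=0$ as equation \eqref{eq:AlphaStarIsAsolutionToThis}. Here the variable $1+x(\nicefrac{\boldsymbol{y}}{\mu}-1)$ takes the value $\frac{(1-x)\mu+x y_{i}}{\mu}$ with probability $p_{i}$, so
\[
\HMof{1+x\big(\nicefrac{\boldsymbol{y}}{\mu}-1\big)}=\Big(\mu\sum_{i\leqslant n}\frac{p_{i}}{(1-x)\mu+x y_{i}}\Big)^{-1},
\]
which equals $1$ iff $\sum_{i}p_{i}\big(\frac{\mu}{(1-x)\mu+xy_{i}}-1\big)=0$, i.e.\ iff $-x\sum_{i}p_{i}\frac{y_{i}-\mu}{(1-x)\mu+xy_{i}}=0$; for $x\neq0$ this is exactly $\phi'(x)=0$. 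Hence on $(0,1)$ equation \eqref{eq:AlphaStarIsAsolutionToThis} has the unique solution $\alpha^{\star}$ (the value $x=0$ solves it trivially, which is why the statement is phrased for the open interval).

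I do not anticipate a deep obstacle; the only thing requiring care is the bookkeeping of the degenerate and boundary cases ($\boldsymbol{y}$ degenerate, $\mu\in\{0\}\cup\{\ExpOf{\boldsymbol{y}},\HMof{\boldsymbol{y}}\}$, $0\in supp(\boldsymbol{y})$ making $\HMof{\boldsymbol{y}}=0$ and $\phi(1)=-\infty$), so that strict concavity and the endpoint-derivative identities are invoked only where valid, together with the check that clearing the factor $-x\neq0$ in the last display neither creates nor loses the relevant root in $(0,1)$.
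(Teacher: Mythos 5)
Your proof is correct and follows essentially the same route as the paper's: reduce to maximizing $\ExpOf{\ln\left(\alpha\boldsymbol{y}+\left(1-\alpha\right)\mu\right)}$, use strict concavity of this function for uniqueness, read the corner conditions off the endpoint derivatives, and rewrite the interior first-order condition as the harmonic-mean equation by factoring out $x$. The only difference is that you treat the degenerate and boundary cases ($\boldsymbol{y}$ degenerate, $0\in supp\left(\boldsymbol{y}\right)$, $\mu=0$) explicitly, which the paper leaves implicit.
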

\begin{proof}
The long-run growth rate when a share $\alpha$ of the population
chooses $\boldsymbol{y}$ is (see Equation~\eqref{eq:growth-rate})
\[
\GrowthRateOf{\alpha}{\boldsymbol{y},\mu}=\GMof{\alpha\cdot\boldsymbol{y}+\left(1-\alpha\right)\cdot\mu}=e^{\ExpOf{\ln\left(\alpha\cdot\boldsymbol{y}+\left(1-\alpha\right)\cdot\mu\right)}}\text{.}
\]

Hence, $\GrowthRateOf{\alpha}{\boldsymbol{y},\mu}$ is maximized iff
$\ln\left(\GrowthRateOf{\alpha}{\boldsymbol{y},\mu}\right)=\ExpOf{\ln\left(\alpha\cdot\boldsymbol{y}+\left(1-\alpha\right)\cdot\mu\right)}$
is maximized, and since
\[
\begin{array}[t]{rl}
\frac{d}{d\alpha}\ln\left(\GrowthRateOf{\alpha}{\boldsymbol{y},\mu}\right) & =\ExpOf{\frac{\boldsymbol{y}-\mu}{\alpha\cdot\boldsymbol{y}+\left(1-\alpha\right)\cdot\mu}}\\
\frac{d^{2}}{d^{2}\alpha}\ln\left(\GrowthRateOf{\alpha}{\boldsymbol{y},\mu}\right) & =-\ExpOf{\left(\frac{\boldsymbol{y}-\mu}{\alpha\cdot\boldsymbol{y}+\left(1-\alpha\right)\cdot\mu}\right)^{2}}<0\text{,}
\end{array}
\]
there is exactly one maximizer for $\GrowthRateOf{\alpha}{\boldsymbol{y},\mu}$
in $\left[0,1\right]$, and the following three statements hold:
\begin{itemize}
\item $\alpha^{\star}\left(\boldsymbol{y},\mu\right)=0$ iff $\frac{d}{d\alpha}\ln\left(\GrowthRateOf{\alpha}{\boldsymbol{y},\mu}\right)|_{\alpha=0}=\ExpOf{\nicefrac{\boldsymbol{y}}{\mu}}-1\leqslant0$,
i.e., $\ExpOf{\boldsymbol{y}}\leqslant\mu$.
\item $\alpha^{\star}\left(\boldsymbol{y},\mu\right)=1$ iff $\frac{d}{d\alpha}\ln\left(\GrowthRateOf{\alpha}{\boldsymbol{y},\mu}\right)|_{\alpha=1}=1-\ExpOf{\nicefrac{\mu}{\boldsymbol{y}}}\geqslant0$,
i.e., $\mu\leqslant\HMof{\boldsymbol{y}}$.
\item If $\mu\in\left(\HMof{\boldsymbol{y}},\ExpOf{\boldsymbol{y}}\right)$,
then $\alpha^{\star}\left(\boldsymbol{y},\mu\right)\in\left(0,1\right)$
is the unique solution to
\[
\ExpOf{\frac{\boldsymbol{y}-\mu}{x\cdot\boldsymbol{y}+\left(1-x\right)\cdot\mu}}=0\text{.}
\]
Noting that for $x\neq0$, 
\[
\begin{array}[t]{rl}
\ExpOf{\frac{\boldsymbol{y}-\mu}{\mu+x\cdot\left(\boldsymbol{y}-\mu\right)}} & =\frac{1}{x}\cdot\left(1-\ExpOf{\frac{\mu}{\mu+x\cdot\left(\boldsymbol{y}-\mu\right)}}\right)=\frac{1}{x}\cdot\left(1-\left(\HMof{1+x\cdot\left(\nicefrac{\boldsymbol{y}}{\mu}-1\right)}\right)^{-1}\right)\end{array}\text{,}
\]
we get that $\alpha^{\star}\left(\boldsymbol{y},\mu\right)\in\left(0,1\right)$
is the unique solution to
\[
\HMof{1+x\cdot\left(\nicefrac{\boldsymbol{y}}{\mu}-1\right)}=1\text{\text{.}\qedhere}
\]
\end{itemize}
\end{proof}
Given a distribution $\PrefDist$ of regular preferences and a risky
alternative $\boldsymbol{y}\in\mathscr{Y}$, we define $\PrefDist_{\boldsymbol{y}}$
to be the distribution of certainty equivalent values of $\boldsymbol{y}$
in the population.Our next result characterizes the optimal distribution
of risk preferences. Specifically, it shows that for any risky alternative
$\boldsymbol{y}$, $\boldsymbol{\left(1\right)}$ the support of $\PrefDist_{\boldsymbol{y}}$
is the range between $\boldsymbol{y}$'s harmonic mean and $\boldsymbol{y}$'s
arithmetic mean, and $\boldsymbol{\left(2\right)}$ the $\lambda$-median
of $\PrefDist_{\boldsymbol{y}}$ is the unique solution to a simple
equation.
\begin{thm}
\label{thm:PhiStar-CE}

Let $\PrefDist$ be a distribution of regular preferences. Then, $\PrefDist$
is optimal iff for any risky alternative $\boldsymbol{y}\in\mathscr{Y}$,
the cumulative density function (CDF) of $\PrefDist_{\boldsymbol{y}}$
is
\[
\textrm{CDF}_{\PrefDist_{\boldsymbol{y}}}\left(x\right)=1-\alpha^{\star}\left(\boldsymbol{y},x\right)\text{.}
\]

In particular, for any (nondegenerate) risky alternative~$\boldsymbol{y}\in\mathscr{Y}$,
\begin{enumerate}
\item The support of $\PrefDist_{\boldsymbol{y}}$ is $\left[\HMof{\boldsymbol{y}},\ExpOf{\boldsymbol{y}}\right]$.
\item For any $\lambda\in\left(0,1\right)$ the $\lambda$-median of $\PrefDist_{\boldsymbol{y}}$
is the unique solution to
\[
\HMof{\lambda+\left(1-\lambda\right)\cdot\nicefrac{\boldsymbol{y}}{x}}=1\text{.}
\]
\end{enumerate}
\end{thm}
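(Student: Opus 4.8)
The plan is to convert the stated ``iff'' into Theorem~\ref{thm:AlphaStar Characterization} by observing that the mass of agents who pick the risky alternative is exactly the survival function of $\PrefDist_{\boldsymbol{y}}$. Once that identification is in place, optimality reads off directly, and parts~(1) and~(2) follow from the shape of $\mu\mapsto\alpha^{\star}(\boldsymbol{y},\mu)$ supplied by Theorem~\ref{thm:AlphaStar Characterization}.

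First I would show that for a regular preference $\prefOver$ one has $\boldsymbol{y}\strictPrefOver\mu\iff\CEof{\prefOver}{\boldsymbol{y}}>\mu$: writing $c=\CEof{\prefOver}{\boldsymbol{y}}$ we have $\boldsymbol{y}\Indifferent c$, so by transitivity $\boldsymbol{y}\strictPrefOver\mu\iff c\strictPrefOver\mu$, and since $c$ and $\mu$ are both safe, monotonicity of the safe alternatives gives $c\strictPrefOver\mu\iff c>\mu$. Hence an agent picks the risky alternative exactly when her certainty equivalent of $\boldsymbol{y}$ strictly exceeds $\mu$, so
\[
\alpha_{\PrefDist}(\boldsymbol{y},\mu)=\PrefDist\!\left(\left\{ \prefOver:\CEof{\prefOver}{\boldsymbol{y}}>\mu\right\} \right)=1-\textrm{CDF}_{\PrefDist_{\boldsymbol{y}}}(\mu)\text{.}
\]
By definition $\PrefDist$ is optimal iff $\alpha_{\PrefDist}(\boldsymbol{y},\mu)=\alpha^{\star}(\boldsymbol{y},\mu)$ for all $\boldsymbol{y}\in\mathscr{Y},\mu\in\mathbb{R}_{+}$ (well posed since $\alpha^{\star}$ is unique by Theorem~\ref{thm:AlphaStar Characterization}), which by the display is precisely $\textrm{CDF}_{\PrefDist_{\boldsymbol{y}}}(x)=1-\alpha^{\star}(\boldsymbol{y},x)$. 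The one point to check is that atoms and the tie-breaking rule cannot create an inconsistency, and they do not: $1-\alpha^{\star}(\boldsymbol{y},\cdot)$ is continuous (see below), so an optimal $\PrefDist_{\boldsymbol{y}}$ is forced to be atomless and the strict-preference convention for ``choosing the risky alternative'' is immaterial.

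For the two ``in particular'' claims I would record, for nondegenerate $\boldsymbol{y}$, that by Theorem~\ref{thm:AlphaStar Characterization} the map $\mu\mapsto\alpha^{\star}(\boldsymbol{y},\mu)$ equals $1$ for $\mu\leqslant\HMof{\boldsymbol{y}}$, equals $0$ for $\mu\geqslant\ExpOf{\boldsymbol{y}}$, and is continuous and strictly decreasing on $\left[\HMof{\boldsymbol{y}},\ExpOf{\boldsymbol{y}}\right]$ -- strict monotonicity being a short argument, since the first-order quantity $\ExpOf{(\boldsymbol{y}-\mu)/(\alpha\cdot\boldsymbol{y}+(1-\alpha)\cdot\mu)}$ appearing in that proof is strictly decreasing in both $\alpha$ and $\mu$ (decreasing in $\alpha$ is the strict concavity established there; decreasing in $\mu$ is a one-line differentiation). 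Consequently $x\mapsto\textrm{CDF}_{\PrefDist_{\boldsymbol{y}}}(x)=1-\alpha^{\star}(\boldsymbol{y},x)$ is $0$ up to $\HMof{\boldsymbol{y}}$, a continuous strictly increasing bijection $\left[\HMof{\boldsymbol{y}},\ExpOf{\boldsymbol{y}}\right]\to\left[0,1\right]$, and $1$ thereafter, which gives part~(1). For part~(2), continuity and strict monotonicity make the $\lambda$-median the unique $m$ with $\textrm{CDF}_{\PrefDist_{\boldsymbol{y}}}(m)=\lambda$, i.e.\ $\alpha^{\star}(\boldsymbol{y},m)=1-\lambda$, and $\lambda\in(0,1)$ forces $m\in\left(\HMof{\boldsymbol{y}},\ExpOf{\boldsymbol{y}}\right)$; so part~(3) of Theorem~\ref{thm:AlphaStar Characterization} says $1-\lambda$ is the unique solution of $\HMof{1+x\cdot(\nicefrac{\boldsymbol{y}}{m}-1)}=1$, and substituting $x=1-\lambda$ and simplifying the argument to $\lambda+(1-\lambda)\cdot\nicefrac{\boldsymbol{y}}{m}$ yields $\HMof{\lambda+(1-\lambda)\cdot\nicefrac{\boldsymbol{y}}{m}}=1$; that $m$ is the unique solution of $\HMof{\lambda+(1-\lambda)\cdot\nicefrac{\boldsymbol{y}}{x}}=1$ then follows because $x\mapsto\lambda+(1-\lambda)\cdot\nicefrac{\boldsymbol{y}}{x}$ is pointwise strictly decreasing in $x$, hence so is its harmonic mean.

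Given Theorem~\ref{thm:AlphaStar Characterization}, the remaining substantive point is that the characterization is not vacuous, i.e.\ that some distribution of regular preferences realizes the CDF $1-\alpha^{\star}(\boldsymbol{y},\cdot)$ for all $\boldsymbol{y}$ simultaneously. I would handle this by a quantile-family construction: for $\theta\in(0,1)$ let $\prefOver_{\theta}$ be the preference whose certainty equivalent of each $\boldsymbol{y}$ is the $\theta$-quantile of the (atomless, supported on $\left[\HMof{\boldsymbol{y}},\ExpOf{\boldsymbol{y}}\right]$) distribution with CDF $1-\alpha^{\star}(\boldsymbol{y},\cdot)$; this map is the identity on degenerate alternatives, hence $\prefOver_{\theta}$ is a regular preference, and one checks it is monotone in the risk-aversion order in $\theta$. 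Letting $\PrefDist$ be the push-forward of Lebesgue measure on $(0,1)$ under $\theta\mapsto\prefOver_{\theta}$, the map $\theta\mapsto\CEof{\prefOver_{\theta}}{\boldsymbol{y}}$ is the quantile function of the target distribution, so $\PrefDist_{\boldsymbol{y}}$ has CDF $1-\alpha^{\star}(\boldsymbol{y},\cdot)$; the rest is routine measure-theoretic bookkeeping. I expect this realization step -- rather than the ``iff'' itself, which is essentially bookkeeping once Theorem~\ref{thm:AlphaStar Characterization} is available -- to be the main obstacle.
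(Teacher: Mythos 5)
Your proof is correct and follows essentially the same route as the paper's: identify $\alpha_{\PrefDist}\left(\boldsymbol{y},\mu\right)$ with the survival function $1-\textrm{CDF}_{\PrefDist_{\boldsymbol{y}}}\left(\mu\right)$, read the ``iff'' off the definition of optimality, and derive the support and $\lambda$-median claims from the shape of $\mu\mapsto\alpha^{\star}\left(\boldsymbol{y},\mu\right)$ supplied by Theorem~\ref{thm:AlphaStar Characterization}. The realization step you flag as the main obstacle is not part of the paper's proof of this theorem; the paper handles it separately in Corollary~\ref{cor:unique-montone} via essentially the same quantile construction you sketch.
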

Note that by Theorem~\ref{thm:AlphaStar Characterization}, $\left(1-\alpha^{\star}\left(\boldsymbol{y},x\right)\right)$
is indeed a CDF for any (nondegenerate) risky alternative $\boldsymbol{y}$.
$\alpha^{\star}\left(\boldsymbol{y},x\right)$ equals one when $x\leqslant\HMof{\boldsymbol{y}}$,
equals zero when $x\geqslant\ExpOf{\boldsymbol{y}}$, and equals the
solution to $\HMof{1+\alpha\cdot\left(\nicefrac{\boldsymbol{y}}{x}-1\right)}=1$
otherwise. The function $x\mapsto\alpha^{\star}\left(\boldsymbol{y},x\right)$
is continuous and strictly downward monotone in $x\in\left(\HMof{\boldsymbol{y}},\ExpOf{\boldsymbol{y}}\right)$
since the function $\left(x,\alpha\right)\mapsto\HMof{1+\alpha\cdot\left(\nicefrac{\boldsymbol{y}}{x}-1\right)}$
is continuous, with a bounded domain, and strictly downward monotone
in $x$. That is, $1-\alpha^{\star}\left(\boldsymbol{y},x\right)$
equals zero for $x\leqslant\HMof{\boldsymbol{y}}$, equals one for
$x\geqslant\ExpOf{\boldsymbol{y}}$, and is continuous and strictly
upward monotone in between.
\begin{proof}
Let $\PrefDist$ be a distribution of regular preferences. An agent
prefers $\boldsymbol{y}$ to a safe alternative $\mu$ iff her certainty
equivalent value of $\boldsymbol{y}$ is higher than $\mu$, which
holds for a share $1-\textrm{CDF}_{\PrefDist_{\boldsymbol{y}}}\left(\mu\right)$
of the population, i.e.,
\[
\alpha_{\PrefDist}\left(\boldsymbol{y},\mu\right)=1-\textrm{CDF}_{\PrefDist_{\boldsymbol{y}}}\left(\mu\right)\text{.}
\]
Hence, $\PrefDist$ is optimal iff for any $\boldsymbol{y}\in\mathscr{Y}$
and $\mu\in\mathbb{R}_{+}$, 
\[
\alpha_{\PrefDist}\left(\boldsymbol{y},\mu\right)=\alpha^{\star}\left(\boldsymbol{y},\mu\right)\text{, i.e., }\textrm{CDF}_{\PrefDist_{\boldsymbol{y}}}\left(\mu\right)=1-\alpha^{\star}\left(\boldsymbol{y},\mu\right)\text{.}
\]
In particular, by Theorem~\ref{thm:AlphaStar Characterization},
for any risky alternative $y\in\mathscr{Y}$, the support of $\PrefDist_{\boldsymbol{y}}$
is $\left[\HMof{\boldsymbol{y}},\ExpOf{\boldsymbol{y}}\right]$. Moreover,
for any $\lambda\in\left(0,1\right)$, the $\lambda$-median of $\PrefDist_{\boldsymbol{y}}$,
$m_{\lambda}$, satisfies
\[
\lambda=\textrm{CDF}_{\PrefDist_{\boldsymbol{y}}}\left(m_{\lambda}\right)=1-\alpha^{\star}\left(\boldsymbol{y},m_{\lambda}\right)\text{.}
\]
Therefore,
\[
\HMof{1+\left(1-\lambda\right)\cdot\left(\frac{\boldsymbol{y}}{m_{\lambda}}-1\right)}=1
\]
and $m_{\lambda}$ is a solution to
\[
\HMof{\lambda+\left(1-\lambda\right)\cdot\nicefrac{\boldsymbol{y}}{x}}=1\text{.}
\]

Lastly, $\HMof{\lambda+\left(1-\lambda\right)\cdot\nicefrac{\boldsymbol{y}}{x}}$
is strictly downward monotone in $x$, and hence there is a unique
solution to $\HMof{\lambda+\left(1-\lambda\right)\cdot\nicefrac{\boldsymbol{y}}{x}}=1\text{.}$
\end{proof}
\begin{cor}
\label{cor:unique-montone} By Theorem~\ref{thm:PhiStar-CE}, the
following is the unique monotone optimal distribution of regular preferences
$\PrefDistOpt$. We index the agents by $\left[0,1\right]$ and define
the preference of Agent~$a\in\left[0,1\right]$ by defining her certainty
equivalent value for any risky alternative $\boldsymbol{y}\in\mathscr{Y}$
to be:
\begin{itemize}
\item $\HMof{\boldsymbol{y}}$ if $a=0$,
\item $\ExpOf{\boldsymbol{y}}$ if $a=1$, and
\item the unique solution to
\[
\HMof{a+\left(1-a\right)\cdot\nicefrac{\boldsymbol{y}}{x}}=1\quad\text{ otherwise.}
\]
\end{itemize}
\end{cor}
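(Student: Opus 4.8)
The plan is to read the corollary off Theorem~\ref{thm:PhiStar-CE} once the monotonicity requirement is imposed. Theorem~\ref{thm:PhiStar-CE} fixes, for every nondegenerate $\boldsymbol{y}\in\mathscr{Y}$, the \emph{marginal} distribution $\PrefDist_{\boldsymbol{y}}$ of the certainty equivalents of $\boldsymbol{y}$ in the population --- its CDF has to equal $x\mapsto1-\alpha^{\star}(\boldsymbol{y},x)$ --- but it leaves open how these marginals are coupled across different alternatives. Monotonicity is exactly what removes this freedom: it forces the certainty equivalent of every alternative to be a non-decreasing function of the certainty equivalent of any fixed reference alternative (the coupling must be comonotone), and a comonotone coupling of prescribed atomless marginals is unique, namely the quantile coupling. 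So the proof splits into: \textbf{(i)} checking that the construction displayed in the corollary is a legitimate distribution of regular preferences that is monotone and optimal; and \textbf{(ii)} showing that any monotone optimal distribution must coincide with it.

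\textbf{(i)} I would first check that for each $a\in[0,1]$ the prescribed map $\boldsymbol{y}\mapsto CE_{a}(\boldsymbol{y})$ is the certainty equivalent function of a regular preference. Existence of certainty equivalents is built in, so it suffices that the map fixes the degenerate alternatives: for $a\in\{0,1\}$ this is $\HMof{\mu}=\ExpOf{\mu}=\mu$, and for $a\in(0,1)$ it follows by substituting $\boldsymbol{y}=\mu$ into $\HMof{a+(1-a)\cdot\nicefrac{\boldsymbol{y}}{x}}=1$, which forces $x=\mu$; monotonicity of the safe alternatives is then immediate, so $CE_{a}$ induces a regular preference $\prefOver_{a}$, and $\PrefDistOpt$ is defined as the push-forward of the uniform distribution on $[0,1]$ under $a\mapsto\prefOver_{a}$ (measurability of this map being a routine matter, using the continuity in $a$ noted below). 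Next I would identify $CE_{a}(\boldsymbol{y})$ with the $a$-quantile of $\PrefDist_{\boldsymbol{y}}$. For $a\in(0,1)$, the relation $1-\alpha^{\star}(\boldsymbol{y},x)=a$ forces $\alpha^{\star}(\boldsymbol{y},x)=1-a\in(0,1)$, hence $x\in(\HMof{\boldsymbol{y}},\ExpOf{\boldsymbol{y}})$ by Theorem~\ref{thm:AlphaStar Characterization}, and then, substituting the maximiser $\alpha^{\star}=1-a$ and the safe alternative $x$ into Equation~\eqref{eq:AlphaStarIsAsolutionToThis}, we get $\HMof{1+(1-a)\cdot(\nicefrac{\boldsymbol{y}}{x}-1)}=1$, which rearranges to $\HMof{a+(1-a)\cdot\nicefrac{\boldsymbol{y}}{x}}=1$ --- exactly the corollary's defining equation. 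Existence, uniqueness, continuity, and strict monotonicity in $x$ of this solution are recorded in the paragraph following Theorem~\ref{thm:PhiStar-CE}: there $x\mapsto1-\alpha^{\star}(\boldsymbol{y},x)$ is continuous and strictly increasing from $0$ to $1$ on $[\HMof{\boldsymbol{y}},\ExpOf{\boldsymbol{y}}]$, hence a bijection onto $[0,1]$ whose (continuous, increasing) inverse supplies all three cases of the corollary, with $a\in\{0,1\}$ giving the endpoints $\HMof{\boldsymbol{y}}$ and $\ExpOf{\boldsymbol{y}}$.

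It follows that $a\mapsto CE_{a}(\boldsymbol{y})$ is continuous and strictly increasing on $[0,1]$ for every nondegenerate $\boldsymbol{y}$. Strictness for all such $\boldsymbol{y}$ is precisely the statement that $a<a'$ implies $\prefOver_{a}$ is strictly more risk averse than $\prefOver_{a'}$, so the support of $\PrefDistOpt$ is a chain in the strong risk aversion order, i.e.\ $\PrefDistOpt$ is monotone. Optimality is then immediate from Theorem~\ref{thm:PhiStar-CE}: for $a$ uniform on $[0,1]$, the share of agents with $CE_{a}(\boldsymbol{y})\leqslant x$ equals the share with $a\leqslant1-\alpha^{\star}(\boldsymbol{y},x)$, so $\textrm{CDF}_{\PrefDistOpt_{\boldsymbol{y}}}(x)=1-\alpha^{\star}(\boldsymbol{y},x)$ for every $\boldsymbol{y}$ and $x$, which is the characterisation of optimality in that theorem.

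\textbf{(ii)} Let $\PrefDist'$ be any monotone optimal distribution of regular preferences. By Theorem~\ref{thm:PhiStar-CE}, $\PrefDist'_{\boldsymbol{y}}$ has CDF $1-\alpha^{\star}(\boldsymbol{y},\cdot)$ --- hence is atomless with support $[\HMof{\boldsymbol{y}},\ExpOf{\boldsymbol{y}}]$ --- for every nondegenerate $\boldsymbol{y}$. Fix a reference nondegenerate $\boldsymbol{y}_{0}$. Since $\PrefDist'$ is monotone, on its (essential) support the map $\prefOver\mapsto\CEof{\prefOver}{\boldsymbol{y}_{0}}$ is strictly increasing, so reindexing each agent by $a:=1-\alpha^{\star}(\boldsymbol{y}_{0},\CEof{\prefOver}{\boldsymbol{y}_{0}})$ makes $a$ uniform on $[0,1]$ and gives $\CEof{\prefOver}{\boldsymbol{y}_{0}}$ the value prescribed by the corollary. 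For any other nondegenerate $\boldsymbol{y}$, monotonicity makes $\CEof{\prefOver}{\boldsymbol{y}}$ a non-decreasing function of $\CEof{\prefOver}{\boldsymbol{y}_{0}}$ for $\PrefDist'$-almost every agent, i.e.\ a non-decreasing coupling of the two atomless laws $\PrefDist'_{\boldsymbol{y}_{0}}$ and $\PrefDist'_{\boldsymbol{y}}$; such a coupling is unique and equals the quantile coupling, which by (i) assigns $\CEof{\prefOver}{\boldsymbol{y}}$ exactly the value prescribed by the corollary. Hence $\PrefDist'=\PrefDistOpt$, the degenerate alternatives (on which every regular preference returns the constant) and the measure-zero agents $a\in\{0,1\}$ being immaterial. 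The step I expect to be the main obstacle is this uniqueness argument: one has to see that the bare hypothesis ``the support is a chain'' forces the certainty-equivalent random variables $\CEof{\cdot}{\boldsymbol{y}}$ and $\CEof{\cdot}{\boldsymbol{y}_{0}}$ to be comonotone, and then invoke that a comonotone coupling of two atomless distributions is unique, so that the marginals --- already pinned down by Theorem~\ref{thm:PhiStar-CE} --- determine the entire joint assignment of certainty equivalents. The remaining pieces (that $CE_{a}$ fixes degenerates, the algebra turning $1-\alpha^{\star}(\boldsymbol{y},x)=a$ into $\HMof{a+(1-a)\cdot\nicefrac{\boldsymbol{y}}{x}}=1$, and the CDF computation for optimality) are routine.
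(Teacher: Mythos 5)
Your proposal is correct and follows essentially the same route as the paper, which offers no explicit proof beyond the assertion ``By Theorem~\ref{thm:PhiStar-CE}'': you read the agent-$a$ certainty equivalent off as the $a$-quantile of $\PrefDist_{\boldsymbol{y}}$, verify via Theorem~\ref{thm:AlphaStar Characterization} that this quantile solves $\HMof{a+\left(1-a\right)\cdot\nicefrac{\boldsymbol{y}}{x}}=1$, and check optimality by the CDF identity. Your part (ii) --- that monotonicity forces the certainty equivalents to be comonotone across alternatives, so the atomless marginals pinned down by Theorem~\ref{thm:PhiStar-CE} admit a unique (quantile) coupling --- is exactly the uniqueness argument the paper leaves implicit, correctly supplied.
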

\begin{rem}
We note that $\CEof a{\boldsymbol{y}}$ is continuous in the parameter~$a$.
It is easy to see that $\HMof{\boldsymbol{y}}$ is the limit solution
when $a\rightarrow0$ for 
\[
\HMof{a+\left(1-a\right)\cdot\nicefrac{\boldsymbol{y}}{x}}=1\text{.}
\]
 For $a\rightarrow1^{-}$, noticing that
\[
\begin{array}{l}
\left(\HMof{a+\left(1-a\right)\cdot\nicefrac{\boldsymbol{y}}{x}}\right)^{-1}=\\
\qquad=1-\left(1-a\right)\cdot\left(\frac{1}{x}\cdot\ExpOf{\frac{\boldsymbol{y}}{1+\left(1-a\right)\cdot\left(\frac{\boldsymbol{y}}{x}-1\right)}}-\ExpOf{\frac{1}{1+\left(1-a\right)\cdot\left(\frac{\boldsymbol{y}}{x}-1\right)}}\right)\text{,}
\end{array}
\]
we get that $\CEof a{\boldsymbol{y}}$ satisfies
\[
\CEof a{\boldsymbol{y}}=\ExpOf{\frac{\boldsymbol{y}}{1+\left(1-a\right)\cdot\left(\frac{\boldsymbol{y}}{\CEof a{\boldsymbol{y}}}-1\right)}}\cdot\left(\ExpOf{\frac{1}{1+\left(1-a\right)\cdot\left(\frac{\boldsymbol{y}}{\CEof a{\boldsymbol{y}}}-1\right)}}\right)^{-1}\text{,}
\]
and hence 
\[
\lim_{a\rightarrow1^{-}}\CEof a{\boldsymbol{y}}=\ExpOf{\boldsymbol{y}}\text{.}
\]
\end{rem}
We note that the behavior of the least and the most risk-averse agents
in $\PrefDistOpt$ is simple and intuitive. The choices of the least
risk-averse agent, Agent~1, are consistent with $\CRRAr 0$ (risk
neutrality), and the choices of the most risk-averse agent, Agent~0,
are consistent with $\CRRAr 2$. By contrast, for any $a\in\left(0,1\right)$
and $\boldsymbol{y}\in\mathscr{Y}$, the choices of Agent~$a$ are
consistent with $\CRRAr{\rho_{a}\left(\boldsymbol{y}\right)}$ for
some $\rho_{a}\left(\boldsymbol{y}\right)\in\left(0,2\right)$; the
dependency of $\rho_{a}\left(\boldsymbol{y}\right)$ on $\boldsymbol{y}$
makes the representation of the preferences of these agents more cumbersome
and in particular, as we show in Appendix~\ref{appendix:The optimal PrefDist can not be represented using vNM utility},
they do not have an expected utility representation.

\section{Numerical Analysis\label{sec:Numeric-Analysis}}

Section \ref{subsec:setup-simulation} presents a Monte Carlo simulation
that we use to evaluate what percentage of the theoretically optimal
growth rate is induced by various distributions of utilities. The
numerical results (Section~\ref{subsec:numeric-results}) show that
simple distributions of preferences, in which all agents have $\CRRAr{}$
utilities and the relative risk coefficient is distributed between
zero (risk neutrality) and two (harmonic utility), achieve (on average)
99.86\% of the optimal long-run growth rate. In Section \ref{subsec:numeric-interpretation},
we interpret these result as suggesting that simple distributions
of $\CRRAr{}$ utilities could be the result of the evolutionary process.

\subsection{Setup and Simulation\label{subsec:setup-simulation}}

The code of the simulation is detailed in the \href{https://www.dropbox.com/s/bhzqlj2mohtkntt/Heller\%20\%26\%20Nehama\%20-\%20Evolutionary\%20Foundation\%20for\%20Heterogeneity\%20in\%20Risk\%20Aversion\%20-\%20Supplementary\%20Material\%20-\%20code.zip}{online supplementary material}.

\paragraph{Distributions of utilities}

We compare 15 distributions of utilities:
\begin{enumerate}
\item Five homogeneous populations in which all agents have the same utility:
\begin{enumerate}
\item \emph{Extreme risk loving}: All agents always choose the risky alternative
(as long as $\PrOf{\boldsymbol{y}>\mu}\neq0$).
\item \emph{Extreme risk aversion}: All agents always choose the safe alternative
(as long as $\PrOf{\boldsymbol{y}<\mu}\neq0$).
\item \emph{Risk neutrality ($\CRRAr 0$)}: All agents evaluate risky choices
by their arithmetic mean.
\item \emph{Logarithmic utility ($\CRRAr 1$)}: All agents evaluate risky
choices by their geometric mean.
\item \emph{Harmonic utility ($\CRRAr 2$)}: All agents evaluate risky choices
by their harmonic mean.
\end{enumerate}
\item Two classes of heterogeneous populations with monotone distributions.
In each class, each agent is endowed with a value $\beta\in\left[0,1\right]$
(each class includes 5 distributions of $\beta$ as detailed below).
All classes have the property characterized by Corollary~\ref{cor:unique-montone},
namely, that the most and the least risk-averse agents (corresponding
to $\beta=1$ and $\beta=0$, respectively) evaluate a risky alternative
$\boldsymbol{y}$ as having a certainty equivalent value of the harmonic
mean and arithmetic mean of $\boldsymbol{y}$, respectively.

The behavior of the agent endowed with value $\beta\in\left[0,1\right]$
in each class is as follows:
\begin{enumerate}
\item \emph{Heterogeneous constant relative risk aversion: populations}:
All agents have $\CRRAr{2\beta}$ preferences where $\beta$'s distribution
is detailed below.
\item \emph{Heterogeneous weighted-average populations}: All agents evaluate
risky alternatives as a weighted average of their harmonic mean and
arithmetic means: $\CEof{\beta}{\boldsymbol{y}}=\beta\cdot\HMof{\boldsymbol{y}}+\left(1-\beta\right)\cdot\ExpOf{\boldsymbol{y}}$,
where $\beta$'s distribution is detailed as follows.\footnote{For any $\beta\neq0,1$, this preference cannot be represented using
expected utility. Consider the lottery $\boldsymbol{L}=\begin{cases}
6 & \nicefrac{1}{2}\\
2 & \nicefrac{1}{2}
\end{cases}$, and the two degenerate lotteries $\boldsymbol{M}=4-\beta$ and $\boldsymbol{N}=4$.
Then, %
\begin{minipage}[t]{0.7\columnwidth}%
\begin{itemize}
\item $\CEof{\beta}{\boldsymbol{L}}=\CEof{\beta}{\boldsymbol{M}}=4-\beta.$
\item $\CEof{\beta}{\nicefrac{1}{2}\boldsymbol{L}+\nicefrac{1}{2}\boldsymbol{N}}=4-\frac{4\beta}{7}$
\item $\CEof{\beta}{\nicefrac{1}{2}\boldsymbol{M}+\nicefrac{1}{2}\boldsymbol{N}}=\frac{1}{2\cdot\left(8-\beta\right)}\cdot\left(64-16\beta+\beta^{2}-\beta^{3}\right)$
\end{itemize}
\end{minipage}

\noindent{}and hence Agent~$\beta$ is indifferent between $\boldsymbol{L}$
and $\boldsymbol{M}$ but not between $\nicefrac{1}{2}\boldsymbol{L}+\nicefrac{1}{2}\boldsymbol{N}$
and $\nicefrac{1}{2}\boldsymbol{M}+\nicefrac{1}{2}\boldsymbol{N}$
(in violation of the \emph{independence axiom} of vNM) and, in particular,
the preference of Agent~$\beta$ cannot be represented by expected
utility.}
\end{enumerate}
We use five beta distributions for $\beta\in\left[0,1\right]$ for
the two classes (as demonstrated in Figure~\ref{Fig:Betas}):
\begin{enumerate}
\item Uniform distribution: $\beta\sim Beta\left(1,1\right)$.
\item Unimodal distribution: $\beta\sim Beta\left(2,2\right)$.
\item Bimodal distribution: $\beta\sim Beta\left(0.5,0.5\right)$.
\item Positively skewed distribution: $\beta\sim Beta\left(2,4\right)$.
\item Negatively skewed distribution: $\beta\sim Beta\left(4,2\right)$.
\end{enumerate}
\end{enumerate}
\newcommand{\BetaCDFint}[3]{
\begin{tikzpicture}%
	\begin{axis}[
		xmin = 0, xmax = 1,
		ymin = 0, 
		xtick distance = .2,
		ytick distance = #3,
		minor x tick num = 1,
		minor y tick num = 0,
		width = \textwidth, height = 0.5\textwidth,
	]
\addplot[
	domain = 0:1,
	samples = 200,
	smooth,
	red,thin,
] {x^(#1-1)*(1-x)^(#2-1) *   (#1+#2-1)! / ( (#1-1)! * (#2-1)! )};
\end{axis}\end{tikzpicture}}
\newcommand{\BetaCDFintAndHalf}[3]{
\begin{tikzpicture}%
	\begin{axis}[
		xmin = 0, xmax = 1,
		ymin = 0, 
		xtick distance = .2,
		ytick distance = #3,
		minor x tick num = 1,
		minor y tick num = 1,
		width = \textwidth, height = 0.5\textwidth,
	]
\addplot[
	domain = 0.001:.999,
	samples = 200,
	smooth,
	red,thin,
] {x^(#1-1) * (1-x)^(#2-1) * 0.31830988618 * 4^(#1+#2-1) *
		(#1+#2-1)! * (#1-.5)! * (#2-.5)! / 
		((2*#1 -1)! * (2*#2 -1)!)
};
\end{axis}\end{tikzpicture}}

\begin{figure}[t]
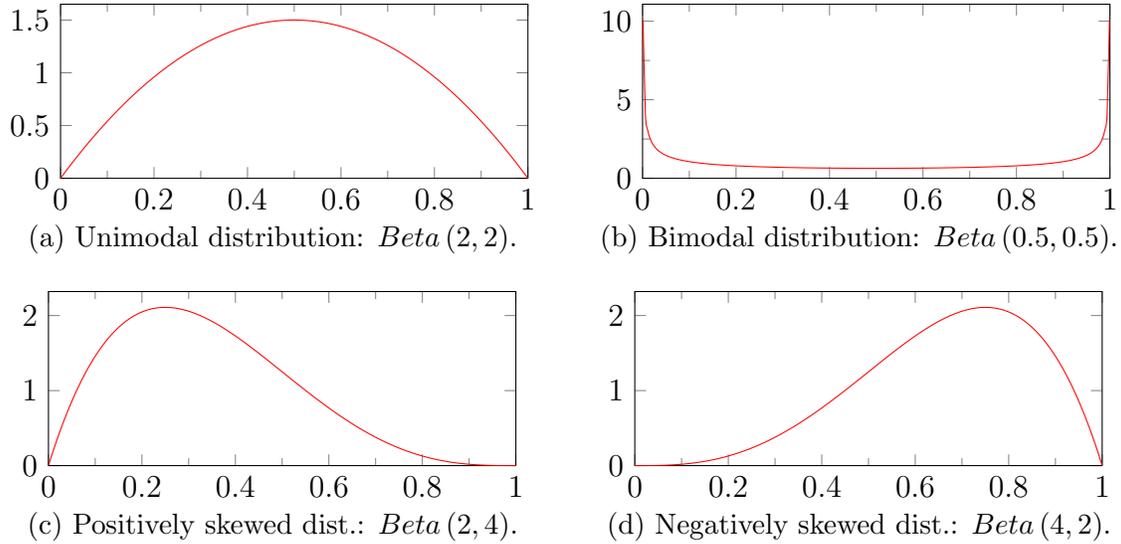
\centering{}

\begin{subfigure}[t]{0.5\linewidth}\centering{}\BetaCDFint{2}{2}{.5}\vspace{-2mm}\caption{\label{Fig:Beta-2-2}Unimodal
distribution: $Beta\left(2,2\right)$.}\end{subfigure}\begin{subfigure}[t]{0.5\linewidth}\centering{}\BetaCDFintAndHalf{.5}{.5}{5}\vspace{-2mm}\caption{\label{Fig:Beta-.5-.5}Bimodal
distribution: $Beta\left(0.5,0.5\right)$.}\end{subfigure}

\bigskip{}

\begin{subfigure}[t]{0.5\linewidth}\centering{}\BetaCDFint{2}{4}{1}\vspace{-2mm}\caption{\label{Fig:Beta-2-4}Positively
skewed dist.: $Beta\left(2,4\right)$.}\end{subfigure}\begin{subfigure}[t]{0.5\linewidth}\centering{}\BetaCDFint{4}{2}{1}\vspace{-2mm}\caption{\label{Fig:Beta-4-2}Negatively
skewed dist.: $Beta\left(4,2\right)$.}\end{subfigure}

\caption{\label{Fig:Betas}Probability Density Function (PDF) of
$Beta\left(\alpha,\beta\right)$.}

\end{figure}

\paragraph{Description of the simulation}

\global\long\def\yLow{\ell}%
\global\long\def\yHigh{h}%
The simulation evaluates the performance of each distribution of utilities
over $10.7M$ choices between a safe alternative $\mu$ and a binary
risky alternative $\boldsymbol{y}$ yielding either a low realization
$\yLow$ or a high realization $\yHigh$. We run the following scenario
for the distribution of the risky alternative and the safe alternative
in each generation (the alternatives in different generations are
independent of each other):
\begin{itemize}
\item In each generation, the two alternatives are defined by three independent
uniform random numbers $p,q,r\in\left[0,1\right]$, where:\footnote{Equivalently, $p,\yLow,$ and $h$ are independent given $\mu$ and
are sampled as follows: $p\in_{\mathbb{U}}\left[0,1\right]$, $\yLow\in_{\mathbb{U}}\left[0,\mu\right]$,
and $\yHigh\in\left[\mu,\infty\right)$ with the inverse-uniform distribution
with parameters $\left\langle 0,1\right\rangle $ ($F\left(x\right)=1-\nicefrac{\mu}{x}\,\,;\,\,f\left(x\right)=\nicefrac{\mu}{x^{2}}$).}
\begin{itemize}
\item $p$ is the probability of the risky alternative yielding its high
realization $\yHigh$: $p=\PrOf{\boldsymbol{y}=\yHigh}$.
\item $q$ is the ratio between the low realization of the risky alternative
and the value of the safe alternative: $q=\nicefrac{\yLow}{\mu}$.
\item $r$ is the ratio between the value of the safe alternative and the
high realization of the risky alternative: $r=\nicefrac{\mu}{\yHigh}$.
\end{itemize}
\end{itemize}
Without loss of generality, we normalize the value of the safe alternative
to be $\mu=1$. In each simulation run we calculate the theoretically
optimal growth rate $\GrowthRateOf{\alpha^{\star}}{\boldsymbol{y},\mu}$,
and then evaluate the percentage of this optimal growth rate achieved
by each of the 15 distributions of utilities. Finally, we calculate
the geometric mean of this percentage for each distribution over all
the simulation runs, which evaluates the relative performance of each
distribution (in terms of its long-run growth rate) in a setup in
which the risky and safe alternatives change from one generation to
the next.

\subsection{Numerical Results\label{subsec:numeric-results}}

The results are summarized in Table~\ref{cor:unique-montone}. The
optimal growth rate in our setup is 1.428 (which is calculated as
the geometric mean of the growth rate achieved in each generation).
We evaluate the performance of each distribution of preferences according
to the decline in the relative growth rate, i.e., according to the
percentage of the optimal growth rate that is lost under this distribution
of preferences. The best homogeneous population is the one in which
all agents have logarithmic utility, and it achieves a loss of $2.1
$ relative to the optimal growth rate. Heterogeneous $\CRRAr{}$ populations
reduce this loss substantially to less than $1
$ (which is better than what can be achieved by the heterogeneous weighted-average
populations). Moreover, heterogeneous $\CRRAr{}$ populations in which
$\beta$ is distributed uniformly (or according to the unimodal distribution)
reduce this loss further to $0.15\%$.

\begin{table}[h]
\caption{Summary of results of simulation runs ($10.7M$ generations).\label{table:Summary of Results of Simulation Runs}}

\global\long\def\Colored#1#2{#1#2}%
\renewcommand{\Colored}[2]{
	\ifnumgreater{#1}{19}{%
		\cellcolor{OrangeRed!80}#1.#2\%
	}{\ifnumgreater{#1}{5}{%
		\cellcolor{Dandelion}#1.#2\%
	}{\ifnumgreater{#1}{0}{%
		\cellcolor{Yellow!80}#1.#2\%
	}{\ifnumgreater{#2}{90}{%
		\cellcolor{SpringGreen}#1.#2\%
	}{
		\cellcolor{LimeGreen}#1.#2\%
}}}}}%

\hspace{-5mm}%
\begin{tabular}{|c|c|c|c|c|}
\hline 
\multirow{2}{*}{Class} &
\multirow{2}{*}{Distribution of $\beta$} &
\begin{tabular}{@{}c@{}}
Empirical\tabularnewline
mean of $\alpha$\tabularnewline
\end{tabular} &
\begin{tabular}{@{}c@{}}
Long-run\tabularnewline
growth rate\tabularnewline
\end{tabular} &
\begin{tabular}{@{}c@{}}
Relative\tabularnewline
growth rate\tabularnewline
loss\tabularnewline
\end{tabular}\tabularnewline
\cline{3-5} \cline{4-5} \cline{5-5} 
 &  & $\ExpOf{\alpha}$ &
$\GMof{\GrowthRateOf{\alpha}{\boldsymbol{y},\mu}}$ &
$1-\frac{\GMof{\GrowthRateOf{\alpha}{\boldsymbol{y},\mu}}}{\GMof{\GrowthRateOf{\alpha^{\star}}{\boldsymbol{y},\mu}}}$\tabularnewline
\hline 
\hline 
Optimal &
(Corollary \ref{cor:unique-montone}) &
0.500 &
1.4251 &
$\Colored 0{00}$\tabularnewline
\hline 
\hline 
\multirow{5}{*}{%
\begin{tabular}{@{}c@{}}
Homogeneous\tabularnewline
populations\tabularnewline
\end{tabular}} &
Extreme risk loving &
1.0 &
0.9949 &
$\Colored{30}2$\tabularnewline
\cline{2-5} \cline{3-5} \cline{4-5} \cline{5-5} 
 & Extreme risk aversion &
0.0 &
1.0000 &
$\Colored{29}8$\tabularnewline
\cline{2-5} \cline{3-5} \cline{4-5} \cline{5-5} 
 & Risk neutrality &
0.644 &
1.3152 &
$\Colored 77$\tabularnewline
\cline{2-5} \cline{3-5} \cline{4-5} \cline{5-5} 
 & %
\begin{tabular}{@{}c@{}}
Logarithmic utility\tabularnewline
($\CRRAr 1$)\tabularnewline
\end{tabular} &
0.499 &
1.3949 &
$\Colored 21$\tabularnewline
\cline{2-5} \cline{3-5} \cline{4-5} \cline{5-5} 
 & %
\begin{tabular}{@{}c@{}}
Harmonic utility\tabularnewline
($\CRRAr 2$)\tabularnewline
\end{tabular} &
0.357 &
1.3172 &
$\Colored 76$\tabularnewline
\hline 
\hline 
\multirow{5}{*}{%
\begin{tabular}{c}
Heterogeneous\tabularnewline
$\CRRAr{}$\tabularnewline
populations\tabularnewline
\end{tabular}} &
Uniform &
0.500 &
1.4232 &
$\Colored 0{14}$\tabularnewline
\cline{2-5} \cline{3-5} \cline{4-5} \cline{5-5} 
 & Unimodal &
0.500 &
1.4231 &
$\Colored 0{15}$\tabularnewline
\cline{2-5} \cline{3-5} \cline{4-5} \cline{5-5} 
 & Bimodal &
0.500 &
1.4211 &
$\Colored 0{29}$\tabularnewline
\cline{2-5} \cline{3-5} \cline{4-5} \cline{5-5} 
 & Positively skewed &
0.551 &
1.4118 &
$\Colored 0{93}$\tabularnewline
\cline{2-5} \cline{3-5} \cline{4-5} \cline{5-5} 
 & Negatively skewed &
0.448 &
1.4117 &
$\Colored 0{94}$\tabularnewline
\hline 
\hline 
\multirow{5}{*}{%
\begin{tabular}{@{}c@{}}
Heterogeneous\tabularnewline
weighted-average\tabularnewline
populations\tabularnewline
\end{tabular}} &
Uniform &
0.530 &
1.4054 &
$\Colored 14$\tabularnewline
\cline{2-5} \cline{3-5} \cline{4-5} \cline{5-5} 
 & Unimodal &
0.535 &
1.3983 &
$\Colored 19$\tabularnewline
\cline{2-5} \cline{3-5} \cline{4-5} \cline{5-5} 
 & Bimodal &
0.524 &
1.4103 &
$\Colored 10$\tabularnewline
\cline{2-5} \cline{3-5} \cline{4-5} \cline{5-5} 
 & Positively skewed &
0.577 &
1.3757 &
$\Colored 35$\tabularnewline
\cline{2-5} \cline{3-5} \cline{4-5} \cline{5-5} 
 & Negatively skewed &
0.491 &
1.4054 &
$\Colored 14$\tabularnewline
\hline 
\end{tabular}
\end{table}

\subparagraph{Robustness check}

To check the robustness of our results, we tested other parameter
distributions ($30$ additional distributions in total) as follows:
\begin{itemize}
\item By taking the probability, $\PrOf{\boldsymbol{y}=\yHigh}$, and the
two ratios, $\nicefrac{\GMof{\boldsymbol{y}}}{\mu}$ and $\nicefrac{\mu}{\ExpOf{\boldsymbol{y}}}$,
to be three i.i.d. uniformly distributed random numbers in $\left[0,1\right]$. 
\item By conditioning the two distributions on the event $\left[\GMof{\boldsymbol{y}}\leqslant\mu\leqslant\ExpOf{\boldsymbol{y}}\right]$
and the events $\left[\frac{i-1}{k}\leqslant\frac{\mu-\GMof{\boldsymbol{y}}}{\ExpOf{\boldsymbol{y}}-\GMof{\boldsymbol{y}}}\leqslant\frac{i}{k}\right]$
for $k=2,\ldots,5$ and $i=1,\ldots,k$.
\end{itemize}
For all these distributions, we see similar qualitative results in
which the heterogeneous $\CRRAr{}$ populations outperform the other
populations and the optimal growth rate is approximated by a heterogeneous
$\CRRAr{}$ population under a simple distribution of the relative
risk parameter (uniform and unimodal).

\subsection{Interpretation of Results\label{subsec:numeric-interpretation}}

A shift from the optimal homogeneous population of agents (in which
everyone has a logarithmic utility, $\CRRAr 1$) to a heterogeneous
population (in which agents have different $\CRRAr{}$ utilities)
requires evolution to engineer a gene that manifest itself in different
levels of risk aversion for different people in the right proportion.
Our numerical result suggests that such a shift is reasonable as:
$\left(1\right)$ the added advantage of increasing the long-run growth
rate by 2\% per generation (when shifting from a homogeneous population
to a uniform (or unimodal) distribution of $\CRRAr{}$ utilities)
is substantial when compounded over many generations, and $\left(2\right)$
we show two different simple distributions of $\CRRAr{}$ utilities
(uniform and unimodal) that achieve similar nearly-optimal behavior,
which facilitates the genetic implementation, as the gene can implement
one of various nearly-optimal distributions.\footnote{Additional simulation runs suggest that any moderately-unimodal beta
distribution around 0.5 (i.e., any $\beta\sim Beta\left(x,x\right)$
for $1\leq x\leq2$) achieve a nearly-optimal growth rate.}

The numerical results further suggest that it would be difficult for
the evolutionary process to improve the growth rate relative to these
simple distributions of $\CRRAr{}$ utilities. A shift to the optimal
distribution of utilities (or to any other complicated mechanism that
implements exactly the optimal bet-hedging mixtures) would only improve
the mean growth rate by an additional 0.14\%. The cognitive capacity
of humans is limited by various constraints (see, e.g., \citealp{lieder2020resource}),
such as the energy consumption of the human brain (20\% of the body
energy) and the newborn baby's brain size, which is limited by the
birth canal. Allocating more cognitive resources to one activity has
a shadow cost induced by decreasing cognitive resources allocated
to other activities (a decrease that will reduce the agent's growth
rate). Although we do not have a formal model of these shadow costs,
it seems reasonable to conjecture that a shift from a simple $\CRRAr{}$
utility to a complicated non-vNM utility requires more cognitive resources,
and that their shadow costs outweigh the small benefit of 0.14\%.
This suggests that when taking into account cognitive costs, it might
be optimal for the evolutionary process to induce a simple distribution
of $\CRRAr{}$ utilities with relative risk aversion coefficients
between 0 and 2.

\section{Discussion\label{sec:Discussion}}

In what follows we discuss various aspects of our model and their
implications.

\paragraph{Empirical predictions}

Our model suggests that natural selection endowed the population with
(1) risk-averse preferences and (2) heterogeneity in the level of
risk aversion such that the agents' certainty equivalent values for
a given lottery are distributed between the lottery's harmonic mean
and its expectation, and that (3) the preference distribution can
be approximated by constant relative risk aversion utilities with
relative risk aversion between zero and two. Our model deals with
lotteries with respect to the number of offspring (fitness), but it
is plausible that people apply these endowed risk attitudes when dealing
with lotteries over money, which is what is typically tested in the
empirical literature.\footnote{A nonlinear relationship between consumption and fitness in our evolutionary
past might shift the optimal levels of risk aversion with respect
to money. Specifically, if the expected number of offspring is a concave
function of consumption, then the support of the optimal distribution
of relative risk aversion with respect to consumption will be shifted
to the right.} \citet{chiappori2011relative} rely on large panel data and show
that the elasticity of the relative risk aversion index with respect
to wealth is small and statistically insignificant, which supports
our first prediction of people having constant relative risk aversion
utilities. \citet{halek2001demography} relies on life insurance data
to estimate the distribution of the levels of relative risk aversion
in the population. Their data suggests that there is substantial heterogeneity
in the levels of relative risk aversion in the population, and that
about 80\% of the population have levels of relative risk aversion
between zero and two~(\citealp[Figure 1]{halek2001demography}).

\paragraph*{Multiple risky alternatives}

If there are multiple sources of risky alternatives, each with its
own shared risk (e.g., multiple foraging techniques, where agents
using the same foraging technique have correlated risk), then we implicitly
assume that agents use some decision rule to choose between the different
risky sources, and the single risky alternative in our model represents
a combination of these sources. For example, if there are several
independent and identically distributed risky alternatives $\boldsymbol{y}^{1},..,\boldsymbol{y}^{n}$,
then it is not hard to show that it is optimal for the population
to choose these alternatives with equal shares, which can be modeled
by the single risky alternative $\boldsymbol{y}=\frac{\boldsymbol{y}^{1}+...+\boldsymbol{y}^{n}}{n}$.
We do not analyze the general question of how to optimally diversify
risk among different sources of correlated risk.

\paragraph{Monomorphic heterogeneous populations}

The population in our model is a monomorphic population rather than
a polymorphic population; that is, all agents in our model have the
same genotype, which manifests itself in different degrees of risk
aversion in different people. In the biological literature, such phenomena
in which a single genotype induces heterogeneous behavior is known
as genetic expressivity (see, e.g., \citealp[Section 6.4]{Book-PenetranceAndExpressivity}),
and its usage in biological evolutionary models has been popularized
in \citet{grafen1990biological,grafen1990sexual}. Hence, the relative
performance of individuals does not affect the path of the evolutionary
dynamics.

\paragraph{Random expected utility}

Our interpretation of the optimal distribution of preferences in the
population is heterogeneity in the population; namely, some agents
are more risk averse than others. We note that the optimal distribution
can also be implemented by random expected utility (see, e.g.,~\citealp{gul2006random});
namely, each agent is endowed with the optimal distribution of preferences,
and in each decision problem each agent randomly applies one of these
preferences.

\section{Conclusion}

The existing literature has shown that when agents face a choice between
alternatives with various levels of aggregate risk, the optimal growth
rate induces the population to choose the mixture of these lotteries
that achieves the optimal level of bet-hedging. The main contribution
of this paper is to present a new mechanism that allows evolution
to implement the optimal level of bet-hedging. This is done by nature
inducing a heterogeneous population with different levels of risk
aversion, such that the most risk-averse agent is indifferent between
obtaining a risky lottery and obtaining its harmonic mean for sure,
while the least risk-averse agent is risk neutral. Although, the exactly-optimal
distribution of risk-averse utilities is quite complicated, we show
numerically that a nearly-optimal growth rate is induced by a population
of expected utility maximizers with constant relative risk aversion
preferences, in which the risk coefficient is distributed between
zero and two according to a simple distribution (uniform or unimodal).
Such a distribution might be optimal, if one takes into account the
cognitive costs.

\appendix

\section{Optimal Monotone Distribution Does Not Have an Expected Utility Representation\label{appendix:The optimal PrefDist can not be represented using vNM utility}}

Consider the following five lotteries:

\[
\begin{array}{l}
\bullet\,\boldsymbol{L}=\text{The degenerate (safe) lottery }3\qquad\bullet\,\boldsymbol{M}=\begin{cases}
1.5 & \nicefrac{3}{4}\\
20 & \nicefrac{1}{4}
\end{cases}\qquad\bullet\,\boldsymbol{N}=\begin{cases}
10 & \nicefrac{1}{2}\\
15 & \nicefrac{1}{2}
\end{cases}\\
\bullet\,\boldsymbol{X}=\nicefrac{1}{2}\boldsymbol{L}+\nicefrac{1}{2}\boldsymbol{N}=\begin{cases}
3 & \nicefrac{1}{2}\\
10 & \nicefrac{1}{4}\\
15 & \nicefrac{1}{4}
\end{cases}\hfill\bullet\,\boldsymbol{Y}=\nicefrac{1}{2}\boldsymbol{M}+\nicefrac{1}{2}\boldsymbol{N}=\begin{cases}
1.5 & \nicefrac{3}{8}\\
10 & \nicefrac{1}{4}\\
15 & \nicefrac{1}{4}\\
20 & \nicefrac{1}{8}
\end{cases}\hfill
\end{array}
\]

Here, the median agent prefers~$\boldsymbol{L}$ to~$\boldsymbol{M}$
and prefers $\boldsymbol{Y}=\nicefrac{1}{2}\boldsymbol{M}+\nicefrac{1}{2}\boldsymbol{N}$
to $\boldsymbol{X=}\nicefrac{1}{2}\boldsymbol{L}+\nicefrac{1}{2}\boldsymbol{N}$.
But this is a violation of the \emph{independence axiom} of vNM, and
in particular, the preference of the median agent cannot be represented
by expected utility.\footnote{By Theorem~\ref{thm:PhiStar-CE}, the certainty equivalent value
of the median agent for a lottery $\boldsymbol{y}\in\mathscr{Y}$
is the unique solution to $\HMof{1+\nicefrac{\boldsymbol{y}}{x}}=2$.}
\begin{itemize}
\item The median agent's certainty equivalent value for $\boldsymbol{M}$
is $\approx2.54$, and hence she prefers~$\boldsymbol{L}$ to~$\boldsymbol{M}$.
\item Her certainty equivalent value for $\boldsymbol{X}$ is $\approx6.04$,
and hence she prefers the safe option~$6.1$ to~$\boldsymbol{X}$.
\item Her certainty equivalent value for $\boldsymbol{Y}$ is $\approx6.19$,
and hence she prefers~$\boldsymbol{Y}$ to the safe option~$6.1$,
and~$\boldsymbol{Y}$ to~$\boldsymbol{X}$.
\end{itemize}

\end{document}